
\documentclass{amsart}

\usepackage{url}
\usepackage{graphics}
\usepackage{a4wide}
\usepackage{amsmath, amsfonts}
\usepackage{epstopdf}
\usepackage{graphicx}

\newtheorem{theorem}{Theorem}

\newtheorem{corollary}[theorem]{Corollary}

\newcommand{\T}{{\mathcal T}}
\newcommand{\C}{{\mathcal C}}

\addtolength{\voffset}{-1cm}

\title{A note on convex characters, Fibonacci numbers and exponential-time algorithms}

\author{Steven Kelk}

\author{Georgios Stamoulis}

\address{Department of Data Science and Knowledge Engineering (DKE), Maastricht University, P.O. Box 616, 6200 MD Maastricht, The Netherlands}

\email{steven.kelk@maastrichtuniversity.nl}
\email{georgios.stamoulis@maastrichtuniversity.nl}

\begin{document}

\begin{abstract}
Phylogenetic trees are used to model evolution: leaves are labelled to represent contemporary species (``taxa'') and interior vertices represent extinct ancestors. Informally, convex characters are measurements on the contemporary species in which the subset of species (both contemporary and extinct) that share a given state, form a connected subtree. Given an unrooted, binary phylogenetic tree $\T$ on a set of $n\geq 2$ taxa, a closed
(but fairly opaque) expression for the number of convex characters on $\T$ has been known since 1992, and this is independent of the exact topology of $\T$. In this note we prove that
this number is actually equal to the $(2n-1)$th Fibonacci number. Next, we define $g_k(\T)$ to be the number of convex characters on $\T$ in which each state appears on at least $k$ taxa. We show that, somewhat curiously, $g_2(\T)$ is also independent of the topology of $\T$, and is equal to
to the $(n-1)$th Fibonacci number. As we demonstrate, this topological neutrality subsequently breaks down for $k \geq 3$. However, we show that for each fixed $k \geq 1$, $g_k(\T)$ can be computed in $O(n)$ time
and the set of characters thus counted can be efficiently listed and sampled. We use these
insights to give a simple but effective exact algorithm for the NP-hard \emph{maximum parsimony distance}
problem that runs in time $\Theta( \phi^{n} \cdot n^2 )$, where $\phi \approx 1.618...$ is the golden ratio, and an exact algorithm which computes the \emph{tree bisection and reconnection} distance (equivalently, a \emph{maximum agreement forest}) in time $\Theta( \phi^{2n}\cdot \text{poly}(n))$, where $\phi^2 \approx 2.619$.
\end{abstract}

\maketitle

\section{Introduction}
\label{sec:intro}

\noindent
Phylogenetics is the science of accurately and efficiently inferring evolutionary trees given only information about contemporary species \cite{semple_steel_2003}. An important concept within phylogenetics is \emph{convexity}. Essentially this captures the situation when, within a phylogenetic (i.e. evolutionary) tree, each biological state emerges exactly once: it should not emerge, die out, and then re-emerge. More concretely, given a phylogenetic tree and a set of states assigned to its leaves, can we assign states to the internal vertices of the tree such that each state forms a connected ``island'' within the tree? If this is possible, the assignment of states to the leaves is known as a \emph{convex character}. 

In this article we present a number of results concerning the enumeration of convex characters. In Section \ref{sec:prelim} we give formal definitions and describe relevant earlier work. In Section \ref{sec:result} we start by showing that an earlier result counting convex characters can be simplified to a term of the Fibonacci sequence. We then seek to count convex characters with the added restriction that each state should occur on at least $k$ leaves, proving the somewhat surprising result that (as for $k=1$) tree topology is irrelevant for $k=2$, and that a formulation in terms of Fibonacci numbers is again possible. We give an explicit example showing that for $k \geq 3$ the topological neutrality breaks down. In Section \ref{sec:dp} we show that for all $k$ the size of the space can be counted in polynomial time and space using dynamic programming, which also permits listing and sampling uniformly at random, noting also that non-isomorphic trees can have exactly the same vector of space sizes (for $k=1, 2, ...$). In Section \ref{sec:alg} we give a number of algorithmic applications for NP-hard problems arising in phylogenetics that seek to quantify the dissimilarity of two phylogenetic trees. Finally, in Section \ref{sec:conc} we briefy discuss a number of open problems arising from this work. The software associated with this article has been made publicly available.

\section{Preliminaries}\label{sec:prelim}

 For general background on mathematical phylogenetics we refer to \cite{semple_steel_2003,dress2012basic}. An {\it unrooted binary phylogenetic $X$-tree} is an undirected tree $\T =(V(\T),E(\T))$ where every internal vertex has degree 3 and whose leaves are bijectively labelled by a set $X$, where $X$ is
often called the set of \emph{taxa} (representing the contemporary species). We use $n$ to denote $|X|$ and often simply write \emph{tree} when this is clear from the context.

A {\it character} $f$ on $X$ is a surjective function $f: X\rightarrow \C$ for some set $\C$ of {\em states} (where a state represents some characteristic of the species e.g. number of legs). We say that $f$ is an $r$-state character if $|\C|=r$.
Each character naturally induces a partition of $X$ and here we regard two characters as being equivalent if they both induce the same partition of $X$. An {\it extension} of a character $f$ to $V(\T)$ is a function $h: V(\T)\rightarrow \C$ such that $h(x) = f(x)$ for all $x$ in $X$. For such an extension $h$ of $f$, we denote by $l_{h}(\T)$ the number of edges $e=\{u,v\}$ such that $h(u) \neq h(v)$. The {\em parsimony score} of a character $f$ on $\T$, denoted by $l_{f}(\T)$, is obtained by minimizing $l_{h}(\T)$ over all possible extensions $h$ of $f$. We say that a character $f: X \rightarrow \C$ is \emph{convex on $\T$} if $l_{f}(\T) = |\C| - 1$. Equivalently: a character $f: X \rightarrow \C$ is convex on $\T$ if there exists an extension $h$ of $f$ such that, for each state $c \in \C$, the vertices of $\T$ that are allocated state $c$ (by $h$) form a connected subtree of $\T$. We call such an extension $h$ a \emph{convex extension} of $f$.  See Figure
\ref{fig:g2} for an example. The convexity of a character can be tested in polynomial \cite{fitch_1971,hartigan1973minimum} (in fact, linear \cite{bachoore2006convex}) time.

\begin{figure}[h]
\centering
\includegraphics[scale=0.2]{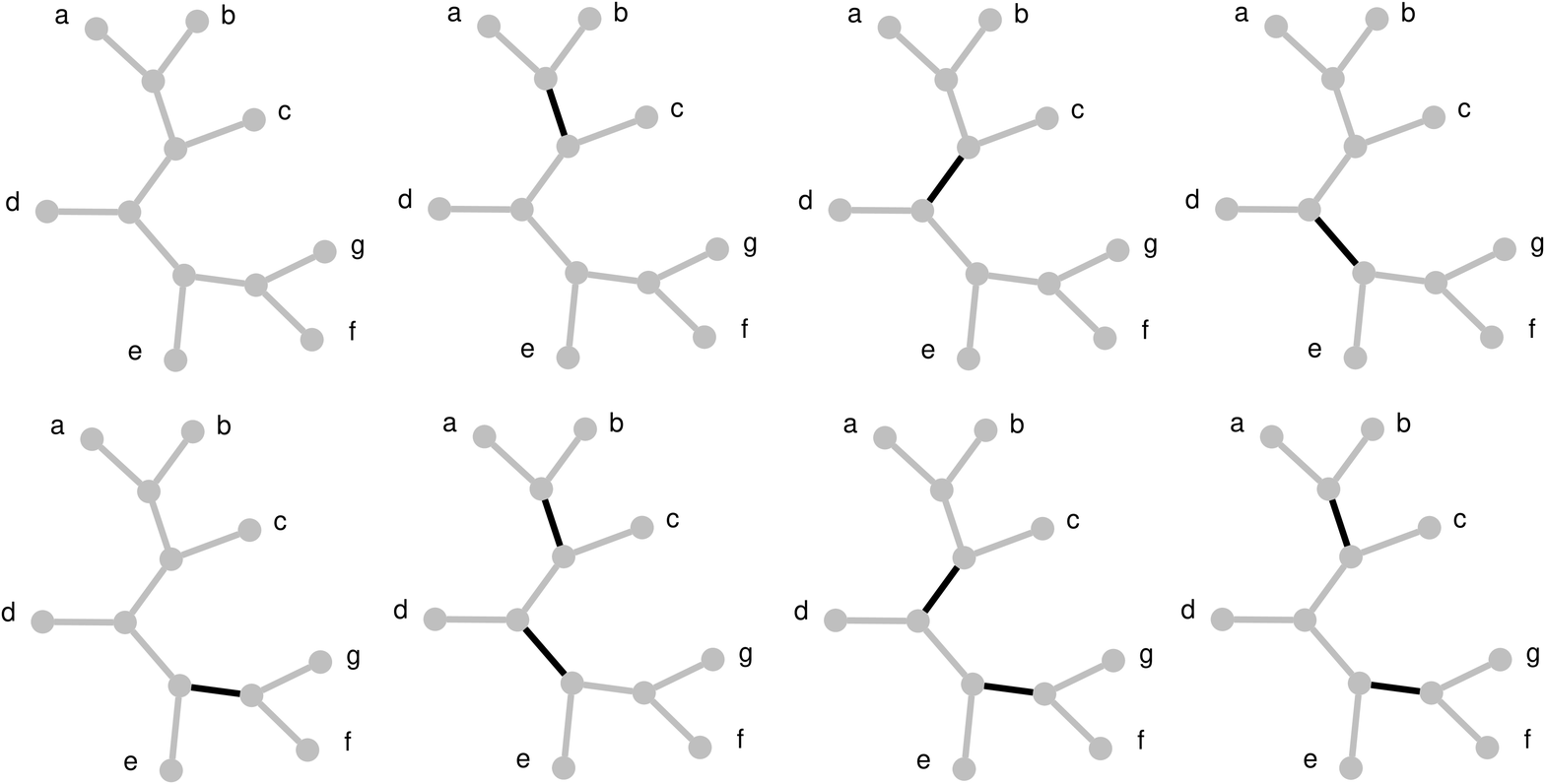}
\caption{For the given tree $\T$ (on 7 taxa) there are 233 convex characters in total,
but only 8 in which each state appears on at least 2 taxa, and these are shown above. 1
character uses exactly 1 state ($abcdefg$), so $g_2(\T,1)=1$, 4 characters use 2 states ($ab|cdefg$, $abc|defg$, $abcd|efg$ and $abcde|fg$), so $g_2(\T,2)=4$ and 3 characters use 3 states ($ab|cd|efg$, $abc|de|fg$ and $ab|cde|fg$), so $g_2(\T,3)=3$. For each character we have shown an extension verifying that the subtree induced by each state is connected i.e. that the character is convex.}
\label{fig:g2} 
\end{figure}

We write $g(\T,r)$ to denote the number of $r$-state convex characters on $\T$ and 
$g_i(\T, r)$ $(i \geq 1)$ to denote the number of those characters that have the additional
property that each state used by the character appears on at least $i$ taxa. It follows from the definition of character that $g_1(\T,r) = g(\T,r)$. We define:
\[
g_i(\T) = \sum_{r=1}^{n} g_i(\T,r).
\]
The value $g_1(\T)$ is therefore equal to the total number of convex characters on $\T$. For
the tree shown in Figure \ref{fig:g2}, $g(\T)=g_1(\T)=233$ and $g_2(\T)=8$. 
We adopt the standard convention that the binomial coefficient $\binom{n}{k}$ evaluates to $1$ if $k=0$,  and $0$ if $n < k$ and $k > 0$. In \cite{steel1992complexity} it is proven that, for $n, r \geq 1$,
\[
g(\T,r) = g_1(\T,r) = \binom{2n - r -1}{r - 1}.
\]
Hence,
\begin{equation}
\label{eq:g1}
g_1(\T) = \sum_{r=1}^{n} \binom{2n - r -1}{r - 1}.
\end{equation}
As observed in \cite{steel1992complexity}, the expression for $g_1(\T,r)$ (somewhat surprisingly) does not depend on the topology of $\T$, only on the number of taxa $n$. Hence we can write $g(n), g_1(n)$ and
$g_1(n,r)$ without ambiguity.
\section{Fibonacci numbers and convex characters}\label{sec:result}

\begin{theorem}
\label{thm:poom}
The value $g_2(\T,r)$ only depends on $n$ (i.e. the topology of $\T$ is not relevant) and
for $n \geq 2, r \geq 1$ is given by the expression
\[
g_2(\T,r) = \binom{n-r-1}{r-1}.
\]
\end{theorem}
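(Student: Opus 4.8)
The plan is to prove the stated formula for every fixed $\T$ by induction on $n$; since the right-hand side $\binom{n-r-1}{r-1}$ is manifestly independent of the topology, establishing it for all trees on $n$ taxa automatically delivers the topological-neutrality claim as well, so I would not argue that separately. The base cases are $n=2$ and $n=3$, both handled by hand: for $n=2$ the only convex character in which every state occurs on at least two taxa is the constant one, and for $n=3$ (all three leaves adjacent to a single interior vertex) the cherry argument below forces the character to be constant again; in each case the count agrees with $\binom{n-r-1}{r-1}$, using the stated conventions on binomial coefficients.

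For the inductive step ($n\geq 4$), I would fix a cherry $\{a,b\}$ with common neighbour $v$ and let $u$ be the third neighbour of $v$; since $n\geq 4$, $u$ is an interior vertex. The first key step is a \emph{cherry lemma}: in any convex character $f$ on $\T$ in which every state appears on at least two taxa, one has $f(a)=f(b)$, and moreover every convex extension $h$ satisfies $h(v)=f(a)$. The proof is short: the state $f(a)$ spans a connected subtree containing the leaf $a$, so if $h(v)\neq f(a)$ this subtree is $\{a\}$ and the state is used by only one taxon, a contradiction; hence $h(v)=f(a)$, and applying the same reasoning to $b$ forces $f(b)=f(a)$ as well; finally the colour class of $a$ and $b$ contains the $a$--$b$ path and therefore $v$.

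The second, and central, step is to partition the characters being counted according to whether the common colour class of $a$ and $b$ equals $\{a,b\}$ or is strictly larger, and to match each part bijectively with characters on a smaller tree. Let $\T'$ be $\T$ with $a,b$ deleted (so $v$ becomes a leaf), a binary tree on $n-1$ taxa, and let $\T''$ be $\T$ with $a,b,v$ deleted and the resulting degree-two vertex $u$ suppressed, a binary tree on $n-2$ taxa. When the colour class of $a,b$ is strictly larger than $\{a,b\}$, I would send $f$ to the character on $\T'$ that agrees with $f$ off $\{a,b\}$ and colours $v$ with $f(a)$; this has the same number of states, still has all classes of size at least $2$, and the map reverses, giving a bijection onto all such characters on $\T'$. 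When the colour class of $a,b$ is exactly $\{a,b\}$, I would send $f$ to its restriction to the remaining $n-2$ taxa, viewed as a character on $\T''$; this has one fewer state and all classes of size at least $2$, and the construction again reverses. Combining the two parts gives $g_2(\T,r)=g_2(\T',r)+g_2(\T'',r-1)$, and the inductive hypothesis together with Pascal's identity $\binom{n-r-2}{r-1}+\binom{n-r-2}{r-2}=\binom{n-r-1}{r-1}$ completes the induction.

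I expect the main obstacle to be the careful verification that convexity is preserved in both directions of these two bijections. In the forward direction one must check that restricting a convex extension of $f$ to the smaller tree keeps every colour class connected after deleting $a,b$ (and $v$); in the $\{a,b\}$ case one additionally has to argue that any interior vertices left carrying the now-unused colour can be recoloured into an adjacent class without disconnecting anything, so that the restricted character genuinely uses $r-1$ states. In the reverse direction one must extend a convex extension on the smaller tree across the re-attached cherry, respectively across the subdivided edge that re-creates $u$ and carries $v$, again without breaking connectivity. I would also want to invoke the standard fact that the parsimony score, and hence convexity, is unchanged by suppressing degree-two vertices, which is what legitimises passing between $\T\setminus\{a,b,v\}$ and $\T''$.
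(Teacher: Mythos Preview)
Your proposal is correct and follows essentially the same route as the paper: induction on $n$, the cherry observation that $f(a)=f(b)$, the two-case split according to whether that common state appears elsewhere, the recurrence $g_2(\T,r)=g_2(\T',r)+g_2(\T'',r-1)$, and Pascal's rule to finish. The only cosmetic differences are that the paper takes $n\in\{2,3,4\}$ as base cases (you take $n\in\{2,3\}$) and defines $\T'=\T|(X\setminus\{x\})$ rather than deleting both cherry leaves and treating $v$ as the new leaf, but these trees are isomorphic and the bijections coincide.
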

\begin{proof}
We prove this by induction on $n$. For the base case note that for $n \in \{2,3,4\}$ there is only
one binary tree topology (up to relabelling of taxa) possible on $n$ taxa and that the expression correctly evaluates to 1 when $r=1$ and, when $r \geq 2$, evaluates to 0 in all cases except
$n=4, r=2$ when it correctly evaluates to 1.

Consider then $n \geq 5$, and let $r$ be any value $2 \leq r \leq \lfloor n/2 \rfloor$. (For $r=1$
the expression correctly evaluates to 1, and for $r > \lfloor n/2 \rfloor$ the expression
correctly evaluates to 0). Every tree on $4$ or more taxa contains at least one \emph{cherry}: two taxa $x,y$ that have a common parent $u$ where the third neighbour of $u$ is an interior
vertex. Fix such a cherry. (A similar
technique is used in \cite{steel1992complexity} and \cite{steel1995classifying}).
Observe that any convex character $f$ on $\T$ with the property that each state appears at least twice,  has $f(x)=f(y)$. This follows from the connected-subtree definition of convexity.
Now, let $\T' = \T| (X \setminus \{ x\})$ and let $\T'' = \T | (X \setminus \{x,y\})$, where
$\T|X'$  denotes the tree (on the set of taxa $X'$) obtained from $\T$ by taking the minimum subtree connecting the elements of $X'$ and then suppressing vertices of degree 2.

 There are two cases to distinguish. The first case is when the state $f(x)=f(y)$ does not appear on any other taxa. There are $g_2(\T'', r-1)$ such characters. The second case is when $f(x)=f(y)$ does appear on at least one other taxon. There
are $g_2(\T', r)$ such characters. Hence,
\begin{equation}
\label{eq:recur}
g_2(\T,r) = g_2(\T'',r-1) + g_2(\T',r).
\end{equation}
By the inductive hypothesis we have
\begin{align*}
g_2(\T,r) &= \binom{(n-2)-(r-1)-1}{(r-1)-1} + \binom{(n-1)-r-1}{r-1} \\
&= \binom{n-r-2}{r-2} + \binom{n-r-2}{r-1}\\
&= \binom{n-r-1}{r-1}.
\end{align*}
The last equality follows from the well-known identity known as \emph{Pascal's Rule} i.e.
$\binom{n}{k} = \binom{n-1}{k} + \binom{n-1}{k-1}$, which holds for $1 \leq k \leq n$. This completes the proof.
\end{proof}
Consequently, the total number of convex characters on a tree $\T$ with each state appearing at least twice is independent of the topology of $\T$. Specifically, we have:
\begin{equation}
\label{eq:g2}
g_2(\T) = g_2(n) = \sum_{r=1}^{n} \binom{n-r-1}{r-1} =  \sum_{r=1}^{\lfloor n/2 \rfloor} \binom{n-r-1}{r-1}
\end{equation}
\begin{corollary}
\label{cor:switch}
For even $n$, $g_2(n) = g_1( n/2 )$.
\end{corollary}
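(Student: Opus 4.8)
The plan is to prove the identity by directly comparing the two closed-form expressions already established, rather than re-running the cherry-deletion recursion. Since $n$ is even, write $n = 2m$ with $m = n/2 \in \NN$. Substituting into equation \eqref{eq:g2} gives
\[
g_2(2m) = \sum_{r=1}^{2m} \binom{2m - r - 1}{r - 1},
\]
and I would place this side by side with equation \eqref{eq:g1}, taken with parameter $m$ in place of $n$:
\[
g_1(m) = \sum_{r=1}^{m} \binom{2m - r - 1}{r - 1}.
\]
The summands are literally identical; only the upper limits of summation differ.

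The second step is to discard the surplus terms of the $g_2$ sum, i.e. those with $m < r \leq 2m$. For such $r$ one has $2m - r - 1 < r - 1$, while at the same time $r - 1 \geq m \geq 1 > 0$, so the standard binomial convention stated before \eqref{eq:g1} makes $\binom{2m - r - 1}{r - 1} = 0$. Hence these terms contribute nothing, and $g_2(2m) = \sum_{r=1}^{m} \binom{2m - r - 1}{r - 1} = g_1(m)$, which is exactly the assertion with $m = n/2$. Alternatively, one can simply invoke the tightened upper limit $\lfloor n/2 \rfloor$ already recorded in \eqref{eq:g2}, so that for $n = 2m$ the sum is $\sum_{r=1}^{m}\binom{2m-r-1}{r-1}$ on the nose, matching $g_1(m)$ term by term.

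There is no substantive obstacle here: the corollary is essentially a syntactic consequence of Theorem \ref{thm:poom}, equation \eqref{eq:g1}, and the truncation behaviour of binomial coefficients. The only point needing a moment's care is the boundary of the truncated range — one must check that $r - 1 \geq 1$ there, so that the convention returns $0$ and not $1$ — and this holds because $r > m \geq 1$.
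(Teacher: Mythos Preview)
Your proposal is correct and takes essentially the same approach as the paper: substitute $n/2$ for $n$ in equation~\eqref{eq:g1} and observe that the resulting sum matches equation~\eqref{eq:g2} term by term. The paper states this in a single sentence, while you spell out explicitly why the surplus terms vanish under the binomial convention, but the underlying idea is identical.
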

\begin{proof}
This is immediate by observing that Equation (\ref{eq:g2}) can be obtained by substituting $n/2$ for $n$ in Equation (\ref{eq:g1}). 
\end{proof}

Let $F(n)$ denote the $n$th Fibonacci number. That is, $F(0)=0$, $F(1)=1$ and for $n \geq 2$,
$F(n) = F(n-1) + F(n-2)$. For comprehensive background on Fibonacci numbers see \cite{koshy2011fibonacci}.
\begin{theorem}
\label{thm:fiblink}
For $n \geq 2$, $g_2(n) = F(n-1)$ and $g_1(n) = F(2n-1)$.
\end{theorem}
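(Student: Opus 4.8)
The plan is to establish both identities by showing that the relevant sums of binomial coefficients satisfy the Fibonacci recurrence with the correct initial conditions. For the first claim, $g_2(n) = \sum_{r\geq 1}\binom{n-r-1}{r-1}$ by Equation~(\ref{eq:g2}). Reindexing with $j = r-1$ gives $g_2(n) = \sum_{j\geq 0}\binom{n-2-j}{j}$, and it is a classical fact that $\sum_{j\geq 0}\binom{m-j}{j} = F(m+1)$; applying this with $m = n-2$ yields $g_2(n) = F(n-1)$. I would either cite this identity (it appears in \cite{koshy2011fibonacci}) or give the one-line induction: define $S(m) = \sum_{j\geq 0}\binom{m-j}{j}$, split off the $j=0$ term and apply Pascal's Rule $\binom{m-j}{j} = \binom{m-1-j}{j} + \binom{m-1-j}{j-1}$ inside the sum, then re-index the second piece to recognize $S(m) = S(m-1) + S(m-2)$; checking $S(0)=1=F(1)$ and $S(1)=1=F(2)$ closes it.

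For the second claim I would prefer to avoid re-deriving the Fibonacci identity for the ``every other'' sum and instead leverage the first result together with Corollary~\ref{cor:switch}. Since $g_1(n) = \sum_{r\geq 1}\binom{2n-r-1}{r-1} = \sum_{j\geq 0}\binom{2n-2-j}{j} = S(2n-2)$, and we have just shown $S(m) = F(m+1)$, we get $g_1(n) = F(2n-1)$ directly. This is in fact the cleanest route: both statements reduce to the single lemma $S(m) = F(m+1)$, with $g_2(n) = S(n-2)$ and $g_1(n) = S(2n-2)$. (Corollary~\ref{cor:switch} is then simply the observation $S(n-2) = S(2\cdot(n/2) - 2)$, consistent with $g_2(n) = g_1(n/2)$ for even $n$, and serves as a sanity check rather than a needed ingredient.)

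The main obstacle is essentially bookkeeping rather than conceptual: I must be careful with the index ranges so that the binomial-coefficient conventions adopted in the preliminaries (namely $\binom{n}{k} = 0$ when $n < k$ and $k > 0$, and $\binom{n}{0}=1$) make the infinite sums $S(m)$ legitimately finite and make Pascal's Rule applicable term-by-term, including at the boundary where $\binom{m-1-j}{j-1}$ has a negative lower index. I would handle the low-$m$ cases ($m = 0, 1$, i.e. $n = 2$) by direct evaluation to anchor the induction, matching the base cases already checked in the proof of Theorem~\ref{thm:poom}. Once the recurrence $S(m) = S(m-1) + S(m-2)$ is in hand, both conclusions follow by a trivial substitution, so I would present the proof as: (i) prove $S(m) = F(m+1)$ by induction using Pascal's Rule; (ii) substitute $m = n-2$ and $m = 2n-2$.
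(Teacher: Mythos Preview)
Your proposal is correct and follows essentially the same approach as the paper: both reduce the claims to the classical Lucas identity $\sum_{j\geq 0}\binom{m-j}{j}=F(m+1)$ and then substitute. The only minor difference is that for $g_1(n)$ the paper invokes Corollary~\ref{cor:switch} (i.e.\ $g_1(m)=g_2(2m)=F(2m-1)$), whereas you apply the Lucas identity directly with $m=2n-2$; both routes are equally short and your version makes Corollary~\ref{cor:switch} unnecessary.
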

\begin{proof}
The following identity is classical\footnote{This is usually attributed to \'{E}douard Lucas (1876). By applying Pascal's Rule and some algebraic manipulation it can be proven by induction.} $(n \geq 0)$:
\[
F(n+1) = \sum_{k=0}^{\lfloor n/2 \rfloor} \binom{n-k}{k}.
\]
If we index $k$ from $1$ rather than 0 we obtain,
\[
F(n+1) = \sum_{k=1}^{\lfloor n/2 \rfloor + 1 } \binom{n-(k-1)}{k-1}.
\]
Now, if we replace $n$ with $n-2$:
\begin{align*}
F(n-1) &= \sum_{k=1}^{\lfloor (n-2)/2 \rfloor + 1 } \binom{n-k-1}{k-1} \\
&= \sum_{k=1}^{\lfloor n/2 \rfloor  } \binom{n-k-1}{k-1}\\
&= g_2(n).
\end{align*}
The expression for $g_1(n)$ is then obtained by applying Corollary \ref{cor:switch}. 
\end{proof}

The question arises whether the values $g_3(\T,r)$ and/or $g_3(\T)$ share the topological
neutrality of their $g_1$ and $g_2$ counterparts. This is not the case: see Figure \ref{fig:g3}. Here $g_3(\T_1)=5$, because $g_3(\T_1,1)=1$, $g_3(\T_1,2)=3$, $g_3(\T_1,3)=1$ and $g_3(\T_1,r)=0 $ (for $r > 3$). However, $g_3(\T_2)=6$, because $g_3(\T_2,1)=1$, $g_3(\T_2,2)=4$, $g_3(\T_2,3)=1$ and $g_3(\T_2,r)=0$ (for $r >3$). 

\begin{figure}[h]
\centering
\includegraphics[scale=0.2]{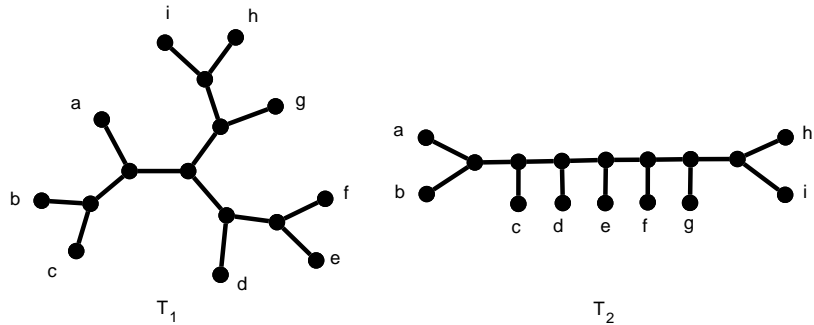}
\caption{The number of characters that are convex on $\T_1$ with each state appearing
on at least 3 taxa, is not the same as the corresponding number for $\T_2$ i.e.
$g_3(\T_1) \neq g_3(\T_2)$. This is because $g_3(\T_1,2)=3$ (the characters $abc|defghi, abcdef|ghi$ and $abcghi|def$) while $g_3(\T_2,2)=4$ (the characters
$abc|defghi, abcd|efghi, abcde|fgh$ and $abcdef|ghi$). Hence
topology does play a role here, contrasting with the situation for $g_1$ and $g_2$.}
\label{fig:g3} 
\end{figure}

\section{Computing $g_k(\T)$ (and listing its elements) with dynamic programming}
\label{sec:dp}

The results from the previous section give rise to a number of questions. Can we compute $g_k(\T)$ in polynomial time, for $k \geq 3$? Also, if we want to explicitly list all the elements counted by $g_k(\T)$ ($k \geq 1)$, is it possible to achieve this in some reasonable total running time e.g. $O( g_k(\T) \cdot \text{poly}(n) )$? We show that, for all $k \geq 1$, the answer to both questions is \emph{yes}. Specifically, we show how to compute $g_k(\T)$ using dynamic programming, for each $k \geq 1$. The combinatorial recurrence within the dynamic programming will also allow us to derive a polynomial-time computable bijection from $\{1, 2, \ldots, g_k(\T) \}$ to the characters counted by $g_k(\T)$. Using this bijection it is then straightforward to list (or sample) these elements. Note that this is also an advance for $g_1(\T)$, since the recurrence used in \cite{steel1992complexity} to derive $g_1(n)$ is based on inclusion-exclusion: it is not obvious how to transform it into a bijection.

 We begin by rooting $\T$ by subdividing an arbitrary edge with a new vertex and (implicitly) directing all edges away from this new vertex. The new vertex becomes the root of the tree. Note that this rooting operation has no impact on the convexity of characters, and the location of the root is irrelevant; it is simply a convenience which ensures that the term ``child'' is well-defined. The dynamic programming works bottom-up, from the leaves towards the root.


Here it is helpful to represent a character $f$ on $X$ as a set of non-overlapping, non-empty subsets $\{B_1, \ldots, B_t\}$ that partition $X$, where each $B_i$ corresponds to a state.
We also need some new definitions. A
character $f$ is \emph{valid} for $g_k(\T)$ if
\begin{itemize}
\item[--] $f$ is convex on $\T$, and
\item[--] for each $B_i \in f$, $|B_i| \geq k$.
\end{itemize} 
Consider an ordered pair $(f, B)$ where $f$ is a character on $X$ and $B \in f$. We call such a pair a \emph{character-root} pair of $\T$ if
\begin{itemize}
\item[--] $f$ is convex on $\T$, and
\item[--] there exists a convex extension of $f$ in which the root of $\T$ is assigned state $B$.
\end{itemize} 
Equality between character-root pairs is defined strictly i.e. $(f,B) = (f',B')$ if and only if $f=f'$ and $B=B'$. We say that a character-root pair $(f,B)$ of $\T$ is \emph{semi-valid} for $g_k(\T)$ if, for each $B_i \in f$, $B_i \neq B \Rightarrow |B_i| \geq k$. Note that if a character-root pair $(f,B)$ is semi-valid for $g_k(\T)$, then $f$ is valid for $g_k(\T)$ if and only if $|B| \geq k$.

At each vertex $u$ of the tree we will compute and store the following $k+1$ values, where $\T_u$ is simply the subtree rooted at $u$:
\begin{itemize}
\item[--] $g_k(\T_u)$,
\item[--] for each $1 \leq m \leq k-1$ the number $h(\T_u, m)$ which is defined as the number of
character-root state pairs $(f,B)$ of $\T_u$ such that both the following conditions hold: $(f,B)$ is semi-valid for $g_k(\T_u)$ and  $|B| = m$. We also store $h(\T_u, k)$ which
is defined slightly differently: we replace the term $|B| = m$ with $|B| \geq k$.
\end{itemize}
If $u$ is a taxon, then:
\begin{itemize}
\item[--] $g_k(\T_u)$ is equal to 1 if $k=1$, and $0$ if $k>1$,
\item[--] $h(\T_u, m)$ is equal to 1 if $m=1$ and $0$ if $m > 1$.
\end{itemize}

We show how to compute these values recursively, assuming the corresponding values have already been computed for $\T_l$ (the subtree rooted at the left child of $u$) and $\T_r$ (the subtree rooted at the right child of $u$). First,
\begin{equation}
\label{eq:g}
g_k(\T_u) = g_k(\T_l) g_k(\T_r) + \sum_{\substack{1 \leq i, j \leq k \\ i+j \geq k}} h(\T_l, i)h(\T_r, j).
\end{equation}
The idea behind this recurrence is that characters counted by $g_k(\T_u)$ can be created in two ways: (i) by taking the union of a character from the left subtree with a character from the right subtree, and (ii) by taking a character-root pair $(f, B)$ from the left subtree, a character-root pair $(f', B')$ from the right subtree, and then merging the root states to yield a character
$(f \setminus B) \cup (f' \setminus B') \cup (B \cup B')$. Characters from the subtrees can only be used for (i) if they are already valid (with respect to $g_k$) in their subtrees. The characters $f$ and $f'$ that are used for (ii) might not be valid with respect to their subtrees, but we do require that they can be combined to obtain a character that \emph{is} valid for $g_k(\T_u)$.  This is possible if and only if $(f,B)$ and $(f',B')$ are semi-valid for their respective subtrees and the sum of the cardinalities of $B$ and $B'$ is at least $k$.  See Figure \ref{fig:rootMerge} for an example.

\begin{figure}[h]
\centering
\includegraphics[scale=0.2]{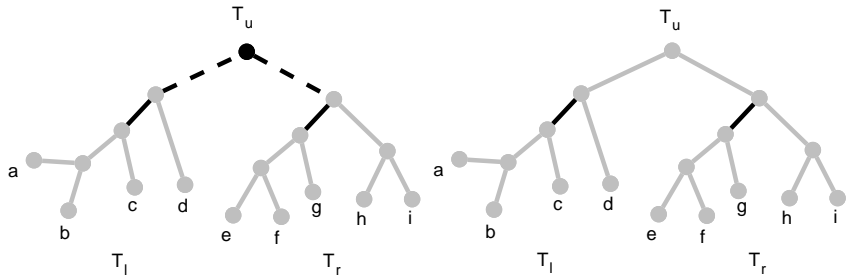}
\caption{The character $abc|d$ is not valid for $g_3(\T_l)$, and $efg|hi$ is not valid for $g_3(\T_r)$, but the character $abc|dhi|efg$ is valid for $g_3(\T_u)$. The validity is obtained
by allowing $d$ and $hi$ to merge, which was possible because they could both ``reach'' the roots of their respective subtrees.}
\label{fig:rootMerge} 
\end{figure}

Second, for $1 \leq m \leq k-1$ we have,
\begin{equation}
\label{eq:h}
h(\T_u, m) =  g_k(\T_l)h(\T_r,m) + h(\T_l,m)g_k(\T_r) + \sum_{\substack{1 \leq i, j \leq k-2 \\ i+j = m}} h(\T_l, i)h(\T_r, j).
\end{equation}
Note that here
$m \leq k-1$, which means that the semi-valid character-root pairs $(f,B)$ counted by this recurrence are such that $f$ is \emph{not} valid
for $g_k(\T_u)$. The first two terms of the recurrence concern the situation analogous to (i) above. Specifically, in this case we assume that no states are merged, so a new semi-valid character-root pair can be created for $\T_u$ if and only if it is constructed from the combination of a valid character from one subtree, with a semi-valid character-root pair from the other. The summation term corresponds to (ii). That is,  we only count combinations of character-root pairs from the two subtrees if the cardinality of the merged state is exactly $m$. 

Finally we have,
\begin{equation}
\label{eq:h2}
h(\T_u, k) =  g_k(\T_l)h(\T_r,k) + h(\T_l,k)g_k(\T_r) + \sum_{\substack{1 \leq i, j \leq k \\ i+j \geq k}} h(\T_l, i)h(\T_r, j).
\end{equation}

This final recurrence is semantically very similar to the previous one. The main difference is that it counts all semi-valid character-root pairs $(f,B)$ for $g_k(\T_u)$ such that $f$ \emph{is} valid for $g_k(\T_u)$.\\
\\
For a given vertex $u$, Equation (\ref{eq:g}) can be computed in $O(k^2)$ time, assuming the values for $\T_l$ and $\T_r$ have already been computed earlier. The same time bound holds for Equations (\ref{eq:h}) (for a specific $1 \leq m \leq k-1$) and  (\ref{eq:h2}). Equation
(\ref{eq:h}) has to be computed for
each $m$, yielding a na\"ive running time bound of $O(k^3)$ (per vertex $u$), but this can easily be improved to $O(k^2)$ by observing that a single $1 \leq i, j \leq k$ sweep over the $h(\T_l, i)$ and $h(\T_r,j)$ values can be recycled for computation of all the different $h(\T_u,m)$ values. There are $2(n-1)$ vertices in the tree. This yields the following theorem.

\begin{theorem}
\label{thm:countdp}
Let $\T$ be an unrooted binary tree on $n$ taxa. For each $1 \leq k \leq n$, $g_k(\T)$ can be computed in $O(k^2 \cdot n)$ time and $O(k \cdot n)$ space.
\end{theorem}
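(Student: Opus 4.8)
The plan is to prove Theorem~\ref{thm:countdp} in two stages: first, to verify that the bottom-up procedure described above is \emph{correct}, i.e.\ that at every vertex $u$ the stored quantities really equal $g_k(\T_u)$, the numbers $h(\T_u,m)$ for $1\le m\le k-1$, and $h(\T_u,k)$; second, to read off the time and space bounds from the per-vertex cost and the number of vertices.

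For correctness I would argue by structural induction on $\T_u$, processing the tree from the leaves upward. The base case ($u$ a taxon) is immediate from the definitions. For the inductive step let $u$ have children with subtrees $\T_l$, $\T_r$ and taxon sets $X_l$, $X_r$. The engine of the argument is a decomposition lemma for a convex character $f$ on $\T_u$: since convexity is equivalent to the connected-subtree condition (Section~\ref{sec:prelim}), at most one block of $f$ can contain taxa from both $X_l$ and $X_r$, because two such blocks would each be forced to contain $u$ in any convex extension. Hence exactly one of two cases holds. Case (i): no block spans both sides; then $f$ is the disjoint union of its restrictions $f_l$ and $f_r$, each of which is convex on its subtree (restrict a convex extension), and conversely any such disjoint union is convex on $\T_u$ (glue convex extensions of $f_l$ and $f_r$ and assign to $u$ the state of $\mathrm{root}(\T_l)$; this is consistent since in case (i) that state appears on no taxon of $X_r$). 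Case (ii): a unique block $C$ spans both sides; then $B:=C\cap X_l$ and $B':=C\cap X_r$ are non-empty, $(f_l,B)$ and $(f_r,B')$ are character-root pairs of $\T_l$ and $\T_r$ (the portion of $C$ inside each subtree is connected and reaches that subtree's root), and conversely merging the root states of any two character-root pairs recovers such an $f$. One then checks, block-size by block-size, that ``$f$ valid for $g_k(\T_u)$'' means, in case (i), ``$f_l$ valid for $g_k(\T_l)$ and $f_r$ valid for $g_k(\T_r)$'', giving the term $g_k(\T_l)g_k(\T_r)$ of Equation~(\ref{eq:g}); and in case (ii) it means ``$(f_l,B)$ and $(f_r,B')$ semi-valid for their subtrees with $|B|+|B'|\ge k$'', which is exactly the summation term of Equation~(\ref{eq:g}) once one recalls that $h(\cdot,k)$ aggregates all root-block sizes $\ge k$. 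Equations~(\ref{eq:h}) and~(\ref{eq:h2}) follow from the same case analysis carried out while additionally tracking which block occupies the root $u$: that block either lies inside one subtree (the two ``product'' terms, with the other subtree contributing a fully valid character) or is the spanning block (the convolution term), and the range $1\le i,j\le k-2$ in~(\ref{eq:h}) is forced because a merged root block of size $m\le k-1$ splits into two non-empty pieces each of size at most $m-1\le k-2$.

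I expect the main obstacle to be turning the decomposition of case (ii) into a genuine bijection, so that the sums neither double-count nor omit. Cases (i) and (ii) are disjoint (existence of a spanning block), and within (ii) the spanning block $C$, its pieces $B,B'$, and their roles as root states of the two subtrees are all determined by $f$; the one delicate point is the ``if and only if'' in the merge operation, namely that gluing $B$ and $B'$ yields a convex character on $\T_u$ \emph{precisely} when each of $(f_l,B)$ and $(f_r,B')$ is a character-root pair. This rests on the character-root pair definition together with the elementary fact that the intersection of a connected subtree of $\T_u$ with the rooted subtree $\T_l$ (or $\T_r$) is again a connected subtree. A symmetric and simpler argument underlies~(\ref{eq:h}) and~(\ref{eq:h2}).

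Finally, the complexity bound. Treating each arithmetic operation as a unit-time step (a standard convention, harmless here since all quantities have $O(n)$ bits), Equation~(\ref{eq:g}) and each single instance of~(\ref{eq:h}) or~(\ref{eq:h2}) is a sum of $O(k^2)$ products of values already stored at $\T_l$ and $\T_r$, hence costs $O(k^2)$; and the $k-1$ instances of~(\ref{eq:h}) can be computed together in $O(k^2)$ rather than $O(k^3)$, because a single sweep over the arrays $(h(\T_l,i))_i$ and $(h(\T_r,j))_j$ yields, for all $m$ simultaneously, the convolution values $\sum_{i+j=m} h(\T_l,i)h(\T_r,j)$. A rooted version of $\T$ has $2(n-1)$ vertices, each storing $k+1$ numbers, so the bottom-up sweep runs in $O(k^2 n)$ time and $O(kn)$ space, and the answer is $g_k(\T)=g_k(\T_\rho)$ for the root $\rho$, since rooting does not affect convexity. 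This establishes the theorem.
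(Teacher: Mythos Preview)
Your proposal is correct and follows essentially the same approach as the paper: the same rooted dynamic program with the quantities $g_k(\T_u)$ and $h(\T_u,m)$, the same recurrences (\ref{eq:g})--(\ref{eq:h2}), and the same complexity accounting (including the single-sweep convolution trick to avoid the na\"ive $O(k^3)$ per-vertex cost). Your correctness argument via the ``at most one spanning block'' decomposition and the explicit bijection is more carefully spelled out than the paper's informal justification, but it is exactly the reasoning the paper sketches when it describes the ``idea behind'' each recurrence.
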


\begin{corollary}
\label{cor:listdp}
Let $\T$ be an unrooted binary tree on $n$ taxa. For each $1 \leq k \leq n$, all the characters that are counted by $g_k(\T)$ can be generated in $O( g_k(\T) \cdot k^2 \cdot n )$ total time, and a character counted by $g_k(\T)$ can be sampled uniformly at random in $O(k^2 \cdot n)$ time and $O(k \cdot n)$ space.
\end{corollary}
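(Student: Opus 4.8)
The plan is to turn the dynamic-programming recurrences of Theorem \ref{thm:countdp} into an explicit, efficiently-computable bijection between $\{1, 2, \ldots, g_k(\T)\}$ and the characters counted by $g_k(\T)$, and then to note that both listing and uniform sampling follow routinely from such a bijection. First I would run the bottom-up pass of Theorem \ref{thm:countdp} once, storing at every vertex $u$ the $k+1$ values $g_k(\T_u)$ and $h(\T_u, 1), \ldots, h(\T_u, k)$; this costs $O(k^2 \cdot n)$ time and $O(k \cdot n)$ space and is done only once, up front. The key observation is that each of the recurrences \eqref{eq:g}, \eqref{eq:h}, \eqref{eq:h2} expresses the quantity at $u$ as a sum of products of already-computed quantities at the children $\T_l$ and $\T_r$, and each summand has a clear combinatorial meaning: it counts exactly those objects at $u$ that arise from a prescribed ``type'' of combination (no merge versus merge, with prescribed cardinalities $i, j$) of a specific object in $\T_l$ with a specific object in $\T_r$. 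Hence a standard ranking/unranking argument applies.

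Concretely, I would describe the \emph{unranking} map: given an index $t \in \{1, \ldots, g_k(\T_u)\}$, first locate which summand of \eqref{eq:g} it falls into by subtracting off the sizes of the summands in a fixed order (this identifies the combination type and, in the merge case, the pair $(i,j)$); within the chosen summand, which is a product $A \cdot B$ of two child-quantities, write $t' = t - (\text{offset})$ as $t' = (a-1) \cdot B + b$ with $1 \le a \le A$, $1 \le b \le B$, and recurse into $\T_l$ with index $a$ and into $\T_r$ with index $b$, now unranking with respect to the appropriate quantity ($g_k$ or $h(\cdot, m)$) at the child. Reaching the leaves, where each quantity is $0$ or $1$, the recursion bottoms out and returns the single trivial character; on the way back up, the recurrences tell us precisely how to combine the two returned (partial) characters — take the disjoint union of their blocks, and in the merge case additionally fuse the two root blocks into $B \cup B'$. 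The \emph{ranking} map is the inverse and is described symmetrically: given a character, at each internal vertex decide from its restrictions to $\T_l$ and $\T_r$ which summand it belongs to, recursively rank the two restrictions, and assemble $t = (\text{offset}) + (a-1)\cdot B + b$. Correctness of the bijection is then an induction on the tree following exactly the case analysis already given in the text accompanying \eqref{eq:g}–\eqref{eq:h2}; the main point to verify is that the summands really do partition the set of objects at $u$ (every valid character/semi-valid pair arises from exactly one child-combination type with exactly one pair of child-objects), which is immediate from the connected-subtree definition of convexity applied at the edge from $u$ to each child.

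For the running time, unranking a single character invokes the recursion once per vertex, and at each internal vertex the work is an $O(k^2)$ scan over the summands of the relevant recurrence (to locate the correct summand and pair $(i,j)$) plus $O(1)$ arithmetic and $O(1)$ block-bookkeeping, using big-integer arithmetic whose cost we absorb into the poly factor or simply fold into the stated bound since the indices have $O(n)$ bits and $k \le n$; over all $2(n-1)$ vertices this is $O(k^2 \cdot n)$ time and $O(k \cdot n)$ space for the assembled character, after the one-time $O(k^2\cdot n)$ preprocessing. Listing all characters is then just unranking $t = 1, 2, \ldots, g_k(\T)$ in turn, giving $O(g_k(\T)\cdot k^2 \cdot n)$ total time. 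Uniform sampling is unranking a $t$ drawn uniformly from $\{1, \ldots, g_k(\T)\}$, which is uniform over the characters precisely because the map is a bijection; this is one unranking call, hence $O(k^2 \cdot n)$ time and $O(k \cdot n)$ space. I expect the only mildly delicate step to be the careful case bookkeeping in the unranking map — in particular making sure that in the ``merge'' summands of \eqref{eq:g} and \eqref{eq:h2} the recursion correctly passes the $h(\cdot, i)$ / $h(\cdot, j)$ quantities (rather than $g_k$) to the children and that the reassembled block $B \cup B'$ is placed consistently as the designated root block for the pair returned to $u$'s parent — but this is a matter of care rather than of any real mathematical obstacle, since the combinatorial content is exactly what justifies the recurrences in Theorem \ref{thm:countdp}.
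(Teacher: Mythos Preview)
Your proposal is correct and follows essentially the same approach as the paper: observe that the recurrences \eqref{eq:g}--\eqref{eq:h2} involve no inclusion--exclusion, so their summands partition the objects and induce a canonical ordering, which yields a bijection from $\{1,\ldots,g_k(\T)\}$ to the characters via top-down backtracking, with per-character cost dominated by the $O(k^2\cdot n)$ dynamic program. Your write-up is in fact more explicit than the paper's about the ranking/unranking mechanics, but the underlying idea is the same.
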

\begin{proof}
Critically, there is no inclusion-exclusion involved Equations (\ref{eq:g}), (\ref{eq:h}) and (\ref{eq:h2}). This allows us to impose a canonical ordering on the characters (and character-root pairs) counted by these equations. For example, within Equation (\ref{eq:g}) we can choose to place the type-(i) characters earlier in the ordering than the type-(ii) characters. Within the $g_k(\T_l) g_k(\T_r)$ type-(i)
characters we can refine the order as follows: the \emph{first} character from the left subtree combined with in turn each of the $g_k(\T_r)$ characters from the right subtree, then the \emph{second} character from the
  left subtree combined in turn with each of the $g_k(\T_r)$ characters from the right subtree, and so on. Once a canonical ordering has been chosen and the dynamic programming has been completed, we can start at the root of $\T$ and (using the $g$ and $h$ values computed at all vertices of the tree) recursively backtrack to generate the uniqely defined $i$th character. Hence, we obtain
a bijection from $\{1, 2, \ldots, g_k(\T) \}$ to the characters counted by $g_k(\T)$. The time and space requirements for backtracking through the tree (i.e. evaluating the bijection for a given element of $\{1, 2, \ldots, g_k(\T) \}$) are dominated by the time and space requirements of executing the original dynamic program, which are $O(k^2 \cdot n)$ and $O(k \cdot n)$ respectively. This bijection
can then be used to list all the characters counted by  $g_k(\T)$ or to sample uniformly at random from this space.
\end{proof}

We have implemented the dynamic programming (and the corresponding algorithms for listing and sampling) in Java and this can be downloaded from \url{http://skelk.sdf-eu.org/convexcount}.\\
\\
Finally within the section, for an unrooted binary tree $\T$ on $n$ leaves, we define the $g$-\emph{spectrum} as simply the vector
$(g_1(\T), g_2(\T), \ldots, g_n(\T))$. It is natural to ask whether two trees on $n$ taxa have the same $g$-spectrum if and only if they are isomorphic (see e.g. \cite {bordewich2005identifying} for related discussions of ``identifiability''). Using the code above we have verified that, while this claim is true for $n \leq 10$ leaves (see the software website for a proof), a counter-example exists for $n=11$, see Figure \ref{fig:spectrum}. 
\begin{figure}[h]
\centering
\includegraphics[scale=0.2]{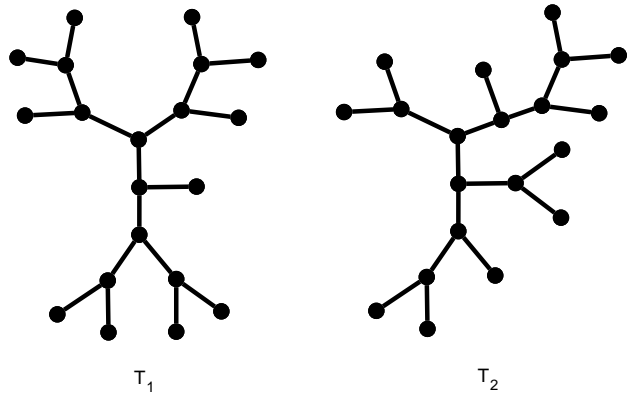}
\caption{These two trees on 11 leaves are non-isomorphic but have the same $g$-spectrum: $(10946, 55, 8, 3, 2, 1, 1, 1, 1, 1, 1)$.}
\label{fig:spectrum} 
\end{figure}

\section{Algorithmic applications}\label{sec:alg}

One of the advantages of expressing $g_1(n)$ and $g_2(n)$ as Fibonacci numbers is that it allows us to give tight bounds on their rate of growth. This can be particularly useful when bounding the running time of algorithms. Consider the following classical, closed-form expression for the Fibonacci numbers $(n \geq 0)$,
where $\phi = \frac{1 + \sqrt{5}}{2} \approx 1.618...$ is the \emph{golden ratio}:
\[
F(n) = \bigg \lfloor \frac{\phi^n}{\sqrt{5}} + \frac{1}{2} \bigg \rfloor.
\]
(It is obtained from Binet's Formula $F(n)= \frac{\phi^{n} - (-\phi)^{-n}}{\sqrt{5}}$ (1843)
by observing that the term $(-\phi)^{-n}$ is vanishing.) Combining with Theorem \ref{thm:fiblink} we obtain
\begin{align*}
g_1(n) &= \bigg \lfloor \frac{\phi^{2n-1}}{\sqrt{5}} + \frac{1}{2} \bigg \rfloor,\\
g_2(n) &= \bigg \lfloor \frac{\phi^{n-1}}{\sqrt{5}} + \frac{1}{2} \bigg \rfloor.\\
\end{align*}
Using asymptotic notation, it is clear that there are $\Theta( \phi^{2n} )$ convex
characters and $\Theta( \phi^{n} )$ convex characters in which each state occurs on at least two taxa. We give two examples of how these insights yield non-trivial exponential-time algorithms for two NP-hard problems arising in phylogenetics.

\subsection{Computation of maximum parsimony distance.}
Let $\T_1, \T_2$ be two unrooted binary trees on the same set of taxa $X$. The metric
$d_{MP}(\T_1, \T_2)$ (the \emph{maximum parsimony distance} of $\T_1$ and $\T_2$) is defined as follows, where $f$ ranges over \emph{all} characters on $X$ and $l_f(.)$ is as defined in Section \ref{sec:prelim}:
\[
d_{MP}(\T_1,\T_2) = \max_{f} | l_f(\T_1) - l_f(\T_2) |
\]
It is NP-hard to compute $d_{MP}$ \cite{kelk2014complexity,fischer2014maximum}. It can be used to quantify the dissimilarity
of two phylogenetic trees and is a lower bound on the similarly NP-hard \emph{tree bisection and reconnection} (TBR) distance, denoted $d_{TBR}$ \cite{AllenSteel2001}. 
\begin{theorem}
Given two unrooted binary trees $\T_1, \T_2$ on the same set of taxa $X$, where $|X|=n$,
$d_{MP}(\T_1, \T_2)$ can be computed in time $\Theta( \phi^{n} \cdot n^2)$,
where $\phi \approx 1.618...$ is the golden ratio. 
\end{theorem}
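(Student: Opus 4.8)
The plan is to reduce the maximization over \emph{all} characters to a maximization over the $\Theta(\phi^n)$ characters counted by $g_2(\T_1)$ and $g_2(\T_2)$, and then invoke Corollary~\ref{cor:listdp} together with a linear-space parsimony subroutine. The heart of the argument is the structural claim
\[
d_{MP}(\T_1,\T_2) = \max\bigl\{\, |l_f(\T_1) - l_f(\T_2)| \; : \; f \text{ is counted by } g_2(\T_1) \text{ or by } g_2(\T_2) \,\bigr\},
\]
i.e.\ that some optimal character may be assumed to be convex on one of the two trees \emph{and} to place each state on at least two taxa.

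To prove the claim I would start from an arbitrary optimal character $f$, say with $l_f(\T_1) \geq l_f(\T_2)$, and first make it convex on $\T_2$. Take a minimum extension $h$ of $f$ on $\T_2$ and let $f'$ be the character whose classes are the leaf-sets of the maximal monochromatic connected ``blocks'' of $h$. Then $f'$ is convex on $\T_2$ (the block colouring is a convex extension), $f'$ refines $f$ so $l_{f'}(\T_1) \geq l_f(\T_1)$, and the number of states of $f'$ is at most the number of blocks, which equals $l_h(\T_2)+1 = l_f(\T_2)+1$, so $l_{f'}(\T_2) \leq l_f(\T_2)$; hence $f'$ is again optimal. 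Next I would kill singleton states one at a time. If $f'$ has a state $\{x\}$, one may assume a convex extension of $f'$ on $\T_2$ colours $\{x\}$ only at the leaf $x$; merging $x$ into the state on the unique neighbour of $x$ gives a character $f''$ that is still convex on $\T_2$ with $l_{f''}(\T_2) = l_{f'}(\T_2)-1$. On the other side, recolouring the leaf $x$ in a minimum extension of $f''$ on $\T_1$ back to a fresh singleton state changes only the pendant edge at $x$, so $l_{f''}(\T_1) \geq l_{f'}(\T_1)-1$; therefore $l_{f''}(\T_1) - l_{f''}(\T_2) \geq l_{f'}(\T_1) - l_{f'}(\T_2)$ and $f''$ is still optimal. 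Since merging strictly decreases the number of states and never creates a new singleton, iterating yields an optimal character counted by $g_2(\T_2)$; the symmetric argument (swapping $\T_1$ and $\T_2$) handles the case $l_f(\T_2)\geq l_f(\T_1)$ and produces one counted by $g_2(\T_1)$, which proves the claim.

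Granting the claim, the algorithm is immediate. Use Corollary~\ref{cor:listdp} with $k=2$ to list the $g_2(\T_1) = \Theta(\phi^n)$ characters counted by $g_2(\T_1)$ in $O(\phi^n \cdot n)$ total time. For each listed $f$ the value $l_f(\T_1)$ is just (number of states)$-1$, since $f$ is convex on $\T_1$, while $l_f(\T_2)$ can be computed in $O(n^2)$ time (e.g.\ by the algorithm of Fitch~\cite{fitch_1971,hartigan1973minimum}); keep the largest $|l_f(\T_1)-l_f(\T_2)|$ seen. Repeat the whole procedure with the roles of $\T_1$ and $\T_2$ exchanged and return the overall maximum. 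The running time is dominated by $\Theta(\phi^n)$ parsimony evaluations of cost $\Theta(n^2)$ each, giving the claimed $\Theta(\phi^n \cdot n^2)$ bound, with the $O(\phi^n \cdot n)$ listing cost absorbed; correctness follows directly from the structural claim.

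The step I expect to be the real work is the structural claim, and within it the elimination of singleton states. Reducing to ``convex on one of the two trees'' is essentially a refinement/block argument, but on its own it only bounds the search space by $\Theta(\phi^{2n}) = \Theta(g_1(n))$; it is the more delicate second step — showing that singletons can be merged away without shrinking the parsimony gap, which requires comparing minimum extensions on \emph{both} trees simultaneously — that brings the bound down to $\Theta(\phi^n) = \Theta(g_2(n))$ and hence makes a $\phi^n$ (rather than $\phi^{2n}$) running time possible.
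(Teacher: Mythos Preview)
Your proposal is correct and follows exactly the paper's approach: restrict the search to the characters counted by $g_2(\T_1)$ and $g_2(\T_2)$, enumerate them via Corollary~\ref{cor:listdp}, and score each with Fitch in $O(n^2)$. The only difference is that the paper simply \emph{cites} \cite{kelk2014complexity,fischer2014maximum} for the structural claim (some optimal character is convex on one tree with every state of size $\geq 2$), whereas you supply a self-contained sketch of that fact; your two-stage refinement/singleton-merging argument is sound.
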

\begin{proof}
In \cite{kelk2014complexity,fischer2014maximum} it is proven that the optimum is achieved
by some character $f$ that is convex on $\T_1$ or $\T_2$ and where
each state in the character occurs on at least two taxa. Hence simply looping through all the
characters counted by $g_2(\T_1)$ and, separately, all the characters counted by $g_2(\T_2)$ is sufficient to locate an optimal character. Note that $l_f(.)$ can be computed in $O(n^2)$ time using Fitch's algorithm\footnote{Fitch's algorithm has running time $O(ns)$ where $n$ is the number of taxa and $s$ is the
number of states in the character. In our context $s$ can rise to $O(n)$.} \cite{fitch_1971} or dynamic programming. Hence, scoring each character $f$ can easily be performed in quadratic time. The result then follows by leveraging
Corollary \ref{cor:listdp}.
\end{proof}

We have implemented the $d_{MP}$ algorithm in Java and for an exponential-time algorithm the results are encouraging; the code is freely available at \url{http://skelk.sdf-eu.org/convexmpdist}.  On a single 32-bit 1.66GHz Intel Atom (N450) processor the algorithm terminates for $n=20,25,30$ in less than 1 second, 3 seconds and 51 seconds respectively. On a more powerful 64-bit 3.10GHz machine the previously fastest algorithm, the Integer Linear Programming (ILP) approach described in \cite{kelk2014complexity}, took 70 seconds to terminate on 12 taxa, and stalled completely on trees with more than 16 taxa, even using state-of-the-art ILP software. The enhanced range of our software has been recently used in experiments to verify that $d_{MP}$ is often a very good lower bound on $d_{MP}$ \cite{kelk2015reduction}.

\subsection{Computation of TBR distance and maximum agreement forests.}
Finally, we note that the results in this article also give an easy (although, in some cases, somewhat crude) upper bound on the number of
\emph{agreement forests} of two unrooted binary trees $\T_1, \T_2$ on $n$ taxa. Recall that, for an unrooted binary phylogenetic tree $\T$ on
$X$ and $X' \subseteq X$, $\T|X'$ is defined to be the unrooted binary phylogenetic tree on $X'$ obtained by taking the minimal subtree of $\T$
that connects $X'$, and suppressing vertices of degree 2. An agreement forest is a partition of $X$ into non-empty subsets
$X_1, \ldots, X_k$ such that (i) within $\T_1$ (respectively, $\T_2$) the minimal connecting subtrees induced by the $X_i$ are vertex-disjoint and (ii) for
each $X_i$, $\T_1|X_i = \T_2|X_i$ (where here equality explicitly takes the taxa into account). See
\cite{AllenSteel2001} and recent articles such as \cite{chen2015parameterized} for further background on agreement forests. A \emph{maximum agremeent forest} is an agreement forest with a minimum
number of components, and this minimum is denoted $d_{MAF}$. Note that due to part (i) of the definition
every agreement forest induces a character that is convex on both $\T_1$ and
$\T_2$ (although not all characters that are convex on both $\T_1$ and $\T_2$ necessarily correspond to
agreement forests\footnote{However, if $\T_1 = \T_2$ then convex characters and agreement forests are related one-to-one.}). Hence there are at most $O( \phi^{2n} )$ agreement forests, which is
$O( 2.619^{n} )$. It is well-known that $d_{TBR}$ is equal to the number of components in a maximum agreement forest, minus 1 \cite{AllenSteel2001}. Hence, again by leveraging Corollary \ref{cor:listdp} we obtain:
\begin{theorem}
Given two unrooted binary trees $\T_1, \T_2$ on the same set of taxa $X$, where $|X|=n$,
$d_{TBR}(\T_1, \T_2) = d_{MAF}(\T_1, \T_2) - 1$ can be computed in time $\Theta( 2.619^n \cdot \emph{poly}(n))$. Moreover, all agreement forests of $\T_1$ and $\T_2$ can be listed in the same time bound.
\end{theorem}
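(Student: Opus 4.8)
The plan is to reduce the enumeration of agreement forests to the enumeration of convex characters and then invoke Corollary~\ref{cor:listdp}. The key is a characterization of agreement forests via convexity. As noted in the text, by part (i) of the definition every agreement forest of $\T_1$ and $\T_2$, viewed as a partition $\{X_1,\ldots,X_t\}$ of $X$, is a character convex on \emph{both} $\T_1$ and $\T_2$. The converse is almost true: a character $f=\{X_1,\ldots,X_t\}$ is an agreement forest of $\T_1$ and $\T_2$ if and only if (a) $f$ is convex on both $\T_1$ and $\T_2$ and (b) $\T_1|X_i=\T_2|X_i$ for every block $X_i$ (with taxa taken into account). The point is that the vertex-disjointness of the minimal connecting subtrees required by part (i) is, in $\T_1$ and in $\T_2$ separately, precisely equivalent to convexity of $f$ on that tree: a convex extension exhibits pairwise vertex-disjoint connected regions, each containing the corresponding minimal connecting subtree, and conversely pairwise vertex-disjoint subtrees of a tree can always be grown to cover all vertices while keeping each part connected. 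Condition (b) is then literally part (ii) of the definition.

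Given this, I would proceed as follows. First, run the algorithm of Corollary~\ref{cor:listdp} with $k=1$ on $\T_1$; this generates all $g_1(\T_1)$ convex characters of $\T_1$, each as a partition of $X$, in $O(g_1(\T_1)\cdot n)$ total time. By Theorem~\ref{thm:fiblink}, $g_1(\T_1)=F(2n-1)=\Theta(\phi^{2n})$ for every tree on $n$ taxa, so this phase makes exactly $\Theta(\phi^{2n})$ iterations. Second, for each convex character $f=\{X_1,\ldots,X_t\}$ of $\T_1$ so produced, test in $\text{poly}(n)$ time whether it is an agreement forest of $\T_1$ and $\T_2$: (a) test convexity of $f$ on $\T_2$ in linear time \cite{bachoore2006convex}, and (b) for each block $X_i$ form $\T_1|X_i$ and $\T_2|X_i$ and check they coincide as leaf-labelled trees. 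Since the $X_i$ partition $X$ and $t\leq n$, step (b) over all blocks of a single $f$ costs $\text{poly}(n)$, so the whole per-character test costs $\text{poly}(n)$. Output $f$ whenever both tests succeed; this lists all agreement forests of $\T_1$ and $\T_2$. Tracking the smallest $t$ among the agreement forests found gives $d_{MAF}(\T_1,\T_2)$, and $d_{TBR}(\T_1,\T_2)=d_{MAF}(\T_1,\T_2)-1$ by \cite{AllenSteel2001}. The total running time is $\Theta(\phi^{2n})$ iterations times $\text{poly}(n)$ each, i.e.\ $\Theta(\phi^{2n}\cdot\text{poly}(n))$ (with $\phi^2\approx 2.619$), and the same bound covers the listing of all agreement forests.

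The step needing most care is the characterization in the first paragraph. One direction is the remark already in the text; the other relies on convexity of $f$ on a tree $\T$ being equivalent to, not merely implied by, pairwise vertex-disjointness of the minimal subtrees of $\T$ connecting the blocks of $f$. This is a standard fact following from the parsimony-score definition of convexity --- the argument being that disjoint subtrees of a tree extend to a partition of the vertex set into connected pieces, e.g.\ by repeatedly assigning an unlabelled vertex to a neighbouring part --- but it should be written out carefully. One should also check the routine point that the per-character verification is genuinely polynomial, which it is because a character has at most $n$ blocks whose taxa partition $X$, so all the restrictions $\T_j|X_i$ together involve only $O(n)$ vertices.
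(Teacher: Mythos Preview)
Your proposal is correct and follows essentially the same route as the paper: enumerate all convex characters on one of the trees via Corollary~\ref{cor:listdp}, filter for those that are agreement forests, and use $g_1(n)=\Theta(\phi^{2n})$ to bound the running time. In fact your write-up is more complete than the paper's, which does not give a formal proof but simply points to the observation that every agreement forest is convex on both trees and invokes Corollary~\ref{cor:listdp}; you additionally spell out the converse characterization (convex on both trees plus condition~(ii)) and the per-character verification, both of which are needed to turn the enumeration into an actual algorithm.
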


\section{Conclusion}
\label{sec:conc}
A number of interesting open problems remain. For example, can we characterize non-isomorphic trees that have the same $g$-spectrum? For a given $k \geq 3$ and $n$, can we give analytical lower and upper bounds on $g_k(\T)$, ranging over the space of all trees $\T$ on $n$ taxa?

\section{Acknowledgements}

We thank Mike Steel for helpful discussions.

\bibliographystyle{plain}
\bibliography{Fibonacci2016}

\begin{thebibliography}{10}

\bibitem{AllenSteel2001}
B.~Allen and M.~Steel.
\newblock Subtree transfer operations and their induced metrics on evolutionary
  trees.
\newblock {\em Annals of Combinatorics}, 5:1--15, 2001.

\bibitem{bachoore2006convex}
E.~Bachoore and H.~Bodlaender.
\newblock Convex recoloring of leaf-colored trees.
\newblock {\em Utrecht University technical report}, 2006.

\bibitem{bordewich2005identifying}
M.~Bordewich, K.~Huber, and C.~Semple.
\newblock Identifying phylogenetic trees.
\newblock {\em Discrete Mathematics}, 300(1):30--43, 2005.

\bibitem{chen2015parameterized}
J.~Chen, J-H. Fan, and S-H. Sze.
\newblock Parameterized and approximation algorithms for maximum agreement
  forest in multifurcating trees.
\newblock {\em Theoretical Computer Science}, 562:496--512, 2015.

\bibitem{dress2012basic}
A.~Dress, K.~Huber, J.~Koolen, V.~Moulton, and A.~Spillner.
\newblock {\em Basic phylogenetic combinatorics}.
\newblock Cambridge University Press, 2012.

\bibitem{fischer2014maximum}
M.~Fischer and S.~Kelk.
\newblock On the maximum parsimony distance between phylogenetic trees.
\newblock {\em Annals of Combinatorics}, 20(1):87--113, 2016.

\bibitem{fitch_1971}
W.~Fitch.
\newblock Toward defining the course of evolution: minimum change for a
  specific tree topology.
\newblock {\em Systematic Zoology}, 20(4):406--416, 1971.

\bibitem{hartigan1973minimum}
J.~Hartigan.
\newblock Minimum mutation fits to a given tree.
\newblock {\em Biometrics}, pages 53--65, 1973.

\bibitem{kelk2014complexity}
S.~Kelk and M.~Fischer.
\newblock On the complexity of computing {M}{P} distance between binary
  phylogenetic trees.
\newblock {\em Annals of Combinatorics}, 2016.
\newblock To appear, preprint arXiv:1412.4076.

\bibitem{kelk2015reduction}
S.~Kelk, M.~Fischer, V.~Moulton, and T.~Wu.
\newblock Reduction rules for the maximum parsimony distance on phylogenetic
  trees.
\newblock {\em Theoretical Computer Science}, 2016.
\newblock To appear, preprint arXiv:1512.07459.

\bibitem{koshy2011fibonacci}
T.~Koshy.
\newblock {\em Fibonacci and Lucas numbers with applications}.
\newblock Pure and Applied Mathematics: A Wiley Series of Texts, Monographs and
  Tracts. Wiley, 2011.

\bibitem{semple_steel_2003}
C.~Semple and M.~Steel.
\newblock {\em Phylogenetics.}
\newblock Oxford University Press, 2003.

\bibitem{steel1992complexity}
M.~Steel.
\newblock The complexity of reconstructing trees from qualitative characters
  and subtrees.
\newblock {\em Journal of Classification}, 9(1):91--116, 1992.

\bibitem{steel1995classifying}
M.~Steel and Y.~Fu.
\newblock Classifying and counting linear phylogenetic invariants for the
  jukes--cantor model.
\newblock {\em Journal of Computational Biology}, 2(1):39--47, 1995.

\end{thebibliography}

\end{document}